\newtheorem{theorem}{Theorem}
\newtheorem{definition}{Definition}
\begin{document}
\title{Is it possible to find the maximum clique in general graphs?}
\author[J.I. Alvarez-Hamelin]{Jos\'e Ignacio Alvarez-Hamelin}
\address{INTECIN (UBA-CONICET), Facultad de Ingenier\'{\i}a, Paseo Col\'on 850, C1063ACV Buenos Aires -- Argentina}
\email{ignacio.alvarez-hamelin@cnet.fi.uba.ar}
\thanks{{\em Acknowledgements}: Juan Ignacio Giribet, Pablo Jacovkis,
Mario Valencia-Pabon, Ariel Futoransky}
\date{\today}
\keywords{algorithms, graphs, complexity}

\maketitle{}

\begin{abstract}
Finding the maximum clique is a known NP-Complete problem and it is also
hard to approximate. This work proposes two efficient algorithms to obtain
it. Nevertheless, the first one is able to fins the maximum for some
special cases, while the second one has its execution time bounded by the
number of cliques that each vertex belongs to. 
\end{abstract}
\section{Introduction}

Finding cliques in graphs is a well known problem, mainly the maximum
clique was found to be a NP-Complete problem~\cite{Kar72}.  Indeed, from
any vertex, to discover the maximum clique to which it belongs, we
should take all combinations of $k$ neighbors and verify whenever they
are mutually adjacent, which yields an exponential time as a function of
the vertex degree. Moreover, \cite{Johnson:1973} shows that there exists
no sublinear approximation algorithm. This problem was largely treated
and an extensive survey can be found in~\cite{bomze1999}.  In this work,
we present two algorithms to find maximum cliques. We decompose the
graph vertex by vertex until no vertices remain; then, we re-build the
graph restoring each of the vertices, one by one, in an inverted order
computing maximal cliques at each step.  Next section is devoted to
present the algorithms and the theorems showing their correctness. A section
showing the algorithms applied to real graphs is presented to illustrate
how they works. The paper is concluded with a discussion about the
complexity of the problem.

\section{Algorithms}

Let $G=(V,E)$ be a simple undirected graph with $n=|V|$ vertices and $m=|E|
\leq |V|\times|V|$ edges. The neighborhood of a vertex is the set composed by
vertices directly connected to it, i.e., $w\in N(v)$ such as $\{v,w\}\in E$.
Then, the degree of vertex $v$ is denoted as $d(v)=|N(v)|$. Let us call
$H=(V,E,A)$ the annotated graph $G$, where $|A|=|V|$ such that if $v\in V$ then
$a_v$ is a list of attributes of vertex $v$.

Attributes $a_v$ are list of sets, each of one a clique, and they are
computed by the proposed algorithms.
For Algorithm~\ref{cliques2}, we denote an element of the list as a set
$L \in a_v^x$, its initialization with set $L$ as
$a_v^x\leftarrow\{L\}$, and the append of set $L$ as
$a_v\leftarrow \{a_v, L \}$ and the elimination of a set L as
$a_v\leftarrow\{a_v\setminus L\}$. The elements of this list can also
interpreted as $a_v^i$ for $i\in N(v)$ (see Algorithm~\ref{cliques}).  
A list of maximal cliques is stated by Definition~\ref{def:max}.

\begin{definition}\label{def:max}
Given a vertex $v$ and the list $l_v$ where its elements are sets $S_i$ having $v$, i.e.,
$S_i \supset v$. These sets $S_i \in l_v$ are maximal if for all $i\neq j$ they
verify the following properties: 
\begin{enumerate}
  \item $v\subseteq S_i\cap S_j $
  \item $S_i\nsubseteq S_j$
  \item $S_i\nsupseteq S_j$
  \end{enumerate}
\end{definition}

Therefore, lists $a_v$ have the properties presented in the following
Definition~\ref{def2}.

\begin{definition}\label{def2}
Let a vertex  $v\in V$  the $a_v$ be a list composed by sets of vertices
$A_i$, such that the following properties hold:
  \begin{enumerate}
  \item each set $A_i\in a_v$ denotes a {\em maximal} clique having $v$ (see Definition~\ref{def:max});~\label{def2:p1}
  \item the maximum clique $K_{\max}(G)$ of the graph $G$ is
found as\label{def2:p2}
   \begin{equation}
   K_{\max}(G)= A \;\;\mathrm{if}\; A=\max\big( |A_i|: A_i\in a_v i\in [1,|a_v|] \big) \enspace,\label{eq_MK}
   \end{equation}
  \end{enumerate}
\noindent where $|a_v|$ refers to the length of $a_v$ list.
\end{definition}
Notice, firstly, each set $A_i$ is maximal in the sense that there not exist
other set $B\supseteq A_i$; secondly  sets $A_i\in a_v$ are maximal cliques at
the time that they are computed (see Algorithm~\ref{cliques2}), but at any later
time can exist other clique maximal containing the $A_i$. As we demonstrate,
computing Equation~\ref{eq_MK} when Algorithm~\ref{cliques2} finished leads to
obtain the maximum clique. We analyze it cost later.

We introduce first an algorithm of low complexity that could find the maximum clique in certain cases, this is the Algorithm~\ref{cliques}.

\begin{algorithm}[t]
 \caption{Function $H\leftarrow${\tt find\_cliques}$(G)$}\label{cliques}
 \SetLine
  \KwIn{a graph $G=(V,E)$}
  \KwOut{a graph $H=(V,E,A)$ }
  \Begin{
     find a vertex $v$ such that $d(v)$ is maximum~\label{algo:deg}\;
     set $M=N(v)$\;
     $G' \leftarrow (V',E'):  V'=V\setminus v, \; E'=E\setminus\{v\times N(v)\}$~\label{algo:minusv}\;
     $a_v^x\leftarrow\emptyset$ for all $x\in M$, or $a_v\leftarrow\emptyset$ if $M=\emptyset$~\label{algo:ini}\;
     set $H'\leftarrow\{\emptyset,\emptyset,\emptyset$\} \; 
     \If {$G'\neq \emptyset$~\label{algo:endif}}{
         set $H' \leftarrow {\tt find\_cliques}(G')$~\label{algo:recur}\;
         \For{each $x\in M$~\label{algo:forneig} }{
	     \For{each set $a_x^i$\label{algo:neigL}, or $a_x=\emptyset$}{
		     {\bf if} ($a_x\neq\emptyset$), {\bf then} $L\leftarrow \left(N(v) \cap a_x^i\right) \cup v \cup x$, {\bf else} $L\leftarrow v \cup x$~\label{algo:seta} \;
                     \If {$(|L| \geq |a_v^x|$~\label{algo:des} or $a_x=\emptyset)$}{
		             $a_x^v\leftarrow L$~\label{algo:neig}\; 
			     $a_v^x\leftarrow L$~\label{algo:local}\;
                     }
	     }
         }
     }
     set $H\leftarrow H' \cup (v,v\times M,a_v)$~\label{algo:valueH}\;
     \Return{H}~\label{algo:return}\;
  }
\end{algorithm}

\begin{theorem}~\label{main}
Algorithm~\ref{cliques} ends if $G=(V,E)$ has finite size, 
computing the attributes $a_v^x$ for all vertices $v\in V$ and
their corresponding neighbors, according to the
property~{\em(\ref{def2:p1})} in Definition~\ref{def2}.
\end{theorem}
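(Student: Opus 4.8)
The plan is to prove the two assertions of the statement—termination and the clique property—separately, both by an induction that mirrors the recursive structure of \texttt{find\_cliques}. For termination I would argue by induction on $n=|V|$. Line~\ref{algo:minusv} produces $G'$ with exactly $n-1$ vertices, and the single recursive call of line~\ref{algo:recur} is guarded by the test $G'\neq\emptyset$ of line~\ref{algo:endif}; hence the recursion depth is bounded by $n$. When the current graph has one vertex the guard fails and the function returns at once, while each deeper call strictly decreases the vertex count. Since every call performs only the two finite \textbf{for} loops over $M=N(v)$ (line~\ref{algo:forneig}) and over the recorded sets $a_x^i$ (line~\ref{algo:neigL}), the whole computation halts after finitely many steps whenever $G$ is finite.

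For the clique property I would use induction on the recursion, taking as inductive hypothesis that after the call \texttt{find\_cliques}$(G')$ every attribute set $a_x^i$ recorded for a vertex $x\in V'$ is a maximal clique of $G'$ containing $x$ in the sense of Definition~\ref{def:max}. The base case is the single-vertex graph: then $G'=\emptyset$, the guarded block is skipped, $a_v$ keeps the empty value set in line~\ref{algo:ini}, and the claim holds vacuously. For the inductive step I would first verify that each $L$ built in line~\ref{algo:seta} is a clique containing $v$. When $a_x\neq\emptyset$ the vertices of $N(v)\cap a_x^i$ are mutually adjacent because they lie in the clique $a_x^i$, they are all adjacent to $v$ because they lie in $N(v)$, and $x$ is adjacent to each of them (it lies in the clique $a_x^i$) and to $v$ (since $x\in M=N(v)$); the degenerate case $a_x=\emptyset$ yields the edge clique $\{v,x\}$. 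Thus $L$ is a clique with $v,x\in L$, and the symmetric writes of lines~\ref{algo:neig} and~\ref{algo:local} install $L$ in both $a_x^v$ and $a_v^x$, keeping the two lists consistent.

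The remaining and hardest part is the maximality demanded by Definition~\ref{def:max}. I would show that the size test $|L|\geq|a_v^x|$ on line~\ref{algo:des}, which retains only the largest clique found through each fixed neighbor $x$, yields property~(i) automatically—every stored set contains $v$ by construction—and is what enforces the antichain conditions (ii)--(iii), since indexing the list by the neighbor $x$ prevents two stored sets obtained from the same $x$ from being nested extensions of one another. The delicate step is to argue that no maximal clique through $v$ present \emph{at the time $v$ is processed} is lost: for every such clique $K$ and every $x\in K\setminus\{v\}$ one must exhibit a recorded clique $a_x^i$ of the neighbor with $K\setminus\{v\}=N(v)\cap a_x^i$, and this is precisely where the inductive hypothesis on $G'$ is used.

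I expect the main obstacle to be phrasing the invariant correctly. Because processing restores vertices in the reverse of their removal order, the attribute lists of already-restored neighbors are updated again (line~\ref{algo:neig}) as earlier vertices reappear; the invariant must therefore refer to the subgraph current at each recursion level rather than to $G$ globally, and must respect the caveat noted after Definition~\ref{def2} that maximality is asserted only at computation time—a set maximal when recorded here may later be contained in a larger clique produced when an earlier-removed vertex is restored. Getting this level-relative invariant stated so that it propagates cleanly through the recursion, while matching property~(\ref{def2:p1}) exactly, is the crux of the argument.
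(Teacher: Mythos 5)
Your termination argument and your proof that every stored set $L$ is a clique containing $v$ and $x$ are correct and essentially the same as the paper's (recursion depth bounded by $n$; $N(v)\cap a_x^i$ is a set of mutually adjacent vertices each adjacent to both $v$ and $x$). The gap is precisely in the part you yourself flag as the crux. You propose to establish maximality by proving that no maximal clique through $v$ present at the moment $v$ is restored is lost, i.e., that for every such clique $K$ and every $x\in K\setminus\{v\}$ some recorded set satisfies $K\setminus\{v\}=N(v)\cap a_x^i$. That claim is false, and the paper itself supplies the refutation: in the counterexample of Figure~\ref{fig:ctex1}, when vertex $13$ is restored the triangle $\{10,11,12\}$ is already present in the current graph, so $\{10,11,12,13\}$ is a maximal clique through $13$ at that time; yet the recorded sets are $a_{11}^{10}=\{5,6,10,11\}$, $a_{12}^{11}=\{3,4,11,12\}$ and $a_{12}^{10}=\{1,2,10,12\}$, none of which intersects $N(13)$ in $\{10,11,12\}$, so that clique is never formed. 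Because each vertex keeps only one set per neighbor, overwritten under the size test of step~\ref{algo:des}, such information is genuinely discarded; this is exactly why Algorithm~\ref{cliques} can miss the maximum clique, as the paper states immediately after the theorem.

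The underlying misreading is of what property~(\ref{def2:p1}) asserts: it is a soundness claim about the sets that \emph{are} stored (they are cliques containing $v$, maximal only in the weak, time-relative sense of Definition~\ref{def:max} and of the remark following Definition~\ref{def2}), not a completeness claim that every maximal clique through $v$ is captured. The paper's proof of this part accordingly argues only that the comparison $|L|\geq|a_v^x|$ in step~\ref{algo:des} retains, for each neighbor $x$, the largest clique among those already known, and never asserts your stronger invariant. Your parenthetical claim that indexing the list by the neighbor $x$ enforces the antichain conditions (ii)--(iii) of Definition~\ref{def:max} is also not right as stated: those conditions quantify over all pairs of stored sets, including sets attached to different neighbors, and such sets can coincide (two equal sets violate (ii) and (iii) as literally written), e.g. $a_v^y=a_v^z$ whenever the best known clique through $y$ and through $z$ is the same set. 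So, besides being left admittedly unfinished, your plan for the decisive step aims at a statement the algorithm does not satisfy; to complete the proof you must weaken the invariant to the soundness statement and show it propagates through steps~\ref{algo:seta}--\ref{algo:local}.
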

\begin{proof}
Let us start with the winding phase of the recursion, that is,
steps~\ref{algo:deg},~\ref{algo:minusv} and~\ref{algo:recur} are
executed until we reach an empty graph $G'$ (see~\ref{algo:endif}).
At this step, the end of recursion (step~\ref{algo:endif}) is found because each call
to {\tt find\_cliques$(G')$} function is done with a reduced set of vertices
$|V'|=|V-1|$ (and its induced graph), and $G$ has finite size; i.e., the
recursion is done $n$ times. 

From there, we analyze the unwinding phase. Let's start when
steps~\ref{algo:valueH} and~\ref{algo:return} are executed for the first
time: the return of the function will carry a graph with just the vertex
$v$, no edges and $a_v=\emptyset$ (step~\ref{algo:ini}), that is
$H=(\{v\},\emptyset,a_v)$.  Then, the following instance(s) can add
vertices of degree zero until one instance begins to add the first edges
(edge), getting a star with leaves (one leaf), because the degree is an
increasing function (the winding phase was carried out taking the
maximum degree at step~\ref{algo:deg}, so the unwinding one reconnects
vertices with the same degree or greater one).  At this instance,
steps~\ref{algo:forneig} and~\ref{algo:neigL} are executed and
step~\ref{algo:seta} yields $L=\{v,x\}$ because $a_x$ is empty (the
vertex $x$ has no registered neighbors until now).  The following
conditional sentence is true (step~\ref{algo:ini} assures
$a_v^x=\emptyset$) setting each neighbor as a clique, on both sides,
the neighbor vertex $a_x^v=\{v,x\}$ and the local one $a_v^x=\{v,x\}$
(see~\ref{algo:des}, \ref{algo:neig} and~\ref{algo:local}, we consider
objects $a_w$ as mutable).

From this point of the algorithm execution any one of the next vertices
could build a $K_s/s\in\mathbb{N}$ (e.g., $s=3$) because a new vertex
joining former vertices constituting a $K_{s-1}$ could appear. It is
worth remarking that it is not possible to build a $K_{s+1}$ at this
stage (e.g., $s+1=4$). The reason why it is not possible is that there are
only $K_{s-1}$ and a new vertex just adds edges between this new vertex
and the present vertices, although this new vertex will never add an
edge between the present vertices. In this way, the size of new cliques
is an increasing function (either the maximum clique remains at same
size or it is increased by one vertex).

Now, we will show the $a_v^x$ is always a clique.  We have also shown that the
first elements in $a_v^x$ constitute a clique of two vertices: $v$ and its
neighbor $x$.  Considering, at any instance in the unwinding phase of the
Algorithm~\ref{cliques}, a vertex $v$ has a clique stored for each one of its
neighbors $x$ in $a_v^x$. Let's consider, without loss of generality, a new
neighbor of $v$, called $w$, having as neighbors $B\subseteq a_v^x=K_t$, that
is $N(w)\supseteq B\cup v$. When step~\ref{algo:des} is executed, either
$a_w^v=\emptyset$ or $a_w^v=\{w,v\}$ (because it is possible that $a_v^x\cap
N(w)$ is empty in~\ref{algo:seta}), and then $a_w^v\leftarrow B \cup w$, which
is also a clique because $B\subseteq a_v^x$ is a clique and $w$ is a neighbor
of all vertices in $B$ by hypothesis. 

To conclude the proof is enough to determine if the
property~{\em(\ref{def2:p1})} in Definition~\ref{def2} is obtained by
Algorithm~\ref{cliques}.  The initial case was already shown in the second
paragraph of this proof. Then, considering a case where a maximal clique of
vertices $w$ and $y$ is $K_s$, and there exists another clique $K_t$ such that
$w,y \in K_t$ and $t\leq s$. As seen in a previous paragraph, vertices can only
build cliques that increase the previous one by just one vertex.  Let's
consider that the next vertex $z$ is connected with all vertices in $K_t$ and
$z$ is at most connected with $t-1$ vertices in $K_s$. The minimum difference
that is needed to distinguish between two cliques is one vertex. Taking into
account, without loss of generality, that the maximum clique for $z$ is
$K_s\cup z$, steps~\ref{algo:seta} and~\ref{algo:des} will select the clique
$K_s\cup z$ because at least one neighbor of $z$, let's call it $u$, has
$a_u^i=K_s/i\in N(u)$.    At this point, the values $a_z^w$, $a_z^y$, $a_w^z$,
and $a_y^z$  will be updated because their size is greater or equal to the size
of a clique found before (see~\ref{algo:des}, \ref{algo:neig}
and~\ref{algo:local}).  Thus, if $t+1<s$ then $K_s$ remains a maximal clique
for $w$ and $y$; or else $t+1\geq s$ and the set of $t$ vertices in $K_t$ plus
$z$ is a maximal clique for $w$ and $y$.  It is worth remarking that it is
possible that a vertex $v$ has several cliques with a neighbor $x$, and the
condition in~\ref{algo:des} assures that the maximum clique, among the known
cliques, is taken.  Notice that $a_w^y$ and $a_y^w$ have the other clique $K_s$
stored, but still $K_{t+1}=a_w^x \max(|a_w^x|, \forall x\in N(v))$ is the
maximum clique, among the known cliques, for $w$ because $t+1\geq s$ and the
previous maximum clique was $K_s$; the same occurs to $y$. 
\end{proof}

Before to analyze its complexity, we remark that Algorithm~\ref{cliques}
not guarantee that the operation  $A=\max\big( |A_i|: A_i\in a_v^x, v\in
V, x\in N(v) \big)$ gives the maximum clique $K_{\max}(G)$.  In fact,
Figure~\ref{fig:ctex1} presents a counterexample. Imagine that the
unwinding phase takes vertices in the order 1, 2, 3, 4, 5, 6, 10, 11,
12, 13, 14.  Regarding that list $a_{12}$  has not the clique
$\{10,11,12\}$ because $a_{11}^{10}=\{5,6,10,11\}$,
$a_{12}^{11}=\{3,4,11,12\}$ and $a_{12}^{10}=\{1,2,10,12\}$; therefore,
when vertex 13 is added it never find the clique $\{10,11,12,13\}$, so
the maximum clique is not found.  Nevertheless, the maximum clique is
often found when very few vertices have degree close to $d_{\max}$ and
the graph is sparse, as  it is the case for the so called scale free
networks~\footnote{Graphs having a heavy tailed degree distribution
which can be bound by a power law.}. 

\begin{figure}
\includegraphics[width=6cm]{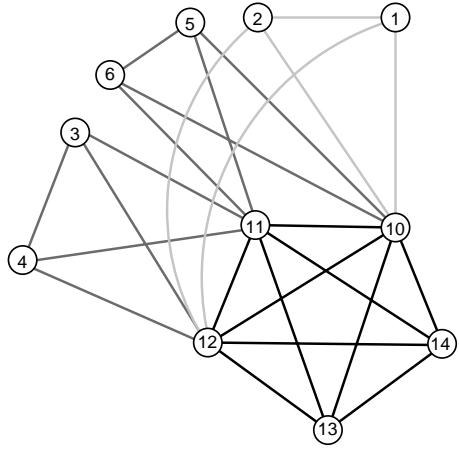}
\caption{Counterexample for Algorithm~\ref{cliques}, which gives one of size 4 instead of 5 when vertices are treated an increasing order.\label{fig:ctex1}}
\end{figure}

\paragraph{\bf Time complexity of Algorithm~\ref{cliques}}
Step~\ref{algo:seta} of Algorithm~\ref{cliques} is an intersection of
two sets of size $d_{\max}$ (maximum degree), if both are
ordered\footnote{This can be done at the beginning for all vertices in
graph $G$, taking $O(n\cdot d_{\max} \cdot \log(d_{\max}))$.} taking
$O(d_{\max})$.  Next, we consider loop~\ref{algo:neigL}, taking
$d_{\max}$ times, what determines $O(d_{\max}^2)$. 
Loop~\ref{algo:forneig} on vertex $v$ neighbors~\ref{algo:forneig} takes
an extra $d_{\max}$, giving $O(d_{\max}^3)$.  Step~\ref{algo:deg} has a
complexity of $O(n)$ if vertices are not ordered, but there is an
additive cost expected smaller than $O(d_{\max}^3)$.  Finally, the
recursion is done for each vertex of the graph, producing a total time
complexity of $O(n\cdot d_{\max}^3)$.  In the case when neighbors are
not ordered, we get $O(n\cdot d_{\max}^4)$. Considering connected graph,
we can express the $n$ recursions and the visit to all neighbors in
loop~\ref{algo:forneig} as visiting all the edges, yielding a time
complexity of $O(m\cdot d_{\max}^2)$.

For graphs in general, where $d_{\max}$ could be bound by $n$, time
complexity is $O(n^4)$. However, for graphs having a heavy tailed
degree distribution which can be bound by a power law\footnote{Most of
the real problems in Complex Systems field has $2\leq\beta\leq3$.}:
$P(d)\propto d^{-\beta}$, it is possible to find a lower bound. Indeed,
these graphs have $n^{\frac{1}{\beta}}$ as a bound of $d_{\max}$,
therefore the complexity yields $O(n^{1+\frac{3}{\beta}})$ for
$\beta\leq 3$; and for $\beta>3$ either the search in~\ref{algo:deg} or
the elimination in \ref{algo:minusv} dominates, reaching the bound
$O(n^2)$ or $O(n\cdot d_{\max}\cdot\log d_{\max})$ respectively. 

\paragraph{\bf Storage complexity} It can be computed as the space to
storage graph $G$, which is $O(n\cdot d_{\max})$, and the space
occupied by all the $a_v^x$ sets. This last quantity can be computed as
the length of each set $a_v^x$, which is bound by $d_{max}$ and the
number of them per vertex, which is also bound by $d_{max}$, yielding 
$O(d_{\max}^2)$ per vertex. Thus, the total storage complexity is
$O(n\cdot d_{\max}^2)$.

\vspace{5mm}
Now, we present another Algorithm~\ref{cliques2} capable to find the maximum
clique in any graph.

\begin{algorithm}[t]
 \caption{Function $H\leftarrow${\tt find\_cliques}$(G)$}\label{cliques2}
 \SetLine
  \KwIn{a graph $G=(V,E)$}
  \KwOut{a graph $H=(V,E,A)$ }
  \Begin{
     find a vertex $v\in V$~\label{algo2:deg}\;
     set $M=N(v)$\;
     $G' \leftarrow (V',E'):  V'=V\setminus v, \; E'=E\setminus\{v\times N(v)\}$~\label{algo2:minusv}\;
     $a_v\leftarrow\emptyset$~\label{algo2:ini}\;
     set $H'\leftarrow\{\emptyset,\emptyset,\emptyset$\} \; 
     \If {$G'\neq \emptyset$~\label{algo2:endif}}{
         set $H' \leftarrow {\tt find\_cliques}(G')$~\label{algo2:recur}\;
         \For{each $x\in M$~\label{algo2:forneig} }{
	     \For{each set $A\in a_x$\label{algo2:neigL}, or $a_x=\emptyset$}{
		     set $new\leftarrow FALSE$, and $old\leftarrow FALSE$~\label{algo2:iniF}\;
		     \lIf{$a_x\neq\emptyset$}{$L\leftarrow \left(N(v) \cap A\right) \cup v \cup x$ }\lElse{$L\leftarrow v \cup x$~\label{algo2:seta}} \;
		     \For{each set $B\in a_v$, or $a_v=\emptyset$\label{algo2:localL}}{
			     \eIf{$|B|<|L|$~\label{algo2:local-s}}{
				     \If{$B\neq \emptyset$ and $B\subset L$~\label{algo2:local-in-L}}{
                                             $a_v\leftarrow\{a_v\setminus B\}$~\label{algo2:elimB}\;
				     }
				     set $new=TRUE$~\label{algo2:BmL}\;
			     }{       
				     \eIf{$B\nsupseteq L$\label{algo2:BnotinL}}{
					     set $new\leftarrow TRUE$~\label{algo2:BnotinL-T}\; 
				     }{
					     set $old\leftarrow TRUE$~\label{algo2:oldT}\;
				      }
                             }
                       }\label{algo2:localL-end}
		     \lIf {$new == TRUE  \;\&\;  old == FALSE$}{ $a_v\leftarrow \{a_v,L\}$\label{algo2:selectL}}
	     }
         }
     }
     set $H\leftarrow H' \cup (v,v\times M,a_v)$~\label{algo2:valueH}\;
     \Return{H}~\label{algo2:return}\;
  }
\end{algorithm}

The following theorem proof the correctness of 
Algorithm~\ref{cliques2}, verifying how this algorithm accords with 
Definition~\ref{def2}.

\begin{theorem}~\label{main2}
Algorithm~\ref{cliques} ends if $G=(V,E)$ has finite size, 
computing correctly attributes in the list $a_v$ for all vertices $v\in
V$, according to Definition~\ref{def2}.
\end{theorem}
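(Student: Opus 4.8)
The plan is to prove termination and then, by a downward induction along the unwinding phase, that each list $a_v$ ends up holding exactly the maximal cliques through $v$ prescribed by Definition~\ref{def2}. Termination I would dispatch as in Theorem~\ref{main}: the winding phase (steps~\ref{algo2:deg}, \ref{algo2:minusv}, \ref{algo2:recur}) removes one vertex per recursive call, so it reaches the empty graph after $n=|V|$ calls (step~\ref{algo2:endif}), and every loop of the unwinding phase (\ref{algo2:forneig}, \ref{algo2:neigL}, \ref{algo2:localL}) ranges over finite sets. For correctness, fix the order $v_1,\dots,v_n$ in which vertices are removed, so that $v_n$ is restored first and $v_1$ last, and write $G_i$ for the subgraph induced on $\{v_i,\dots,v_n\}$ --- exactly the graph present when $v_i$ is restored, for which $M=N(v_i)\cap\{v_{i+1},\dots,v_n\}$. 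I would prove by induction on $i$ decreasing from $n$ to $1$ the invariant: after $v_i$ is restored, $a_{v_i}$ equals the set of all maximal cliques of $G_i$ (in the sense of Definition~\ref{def:max}) that contain $v_i$. The base case $i=n$ is immediate, since $G_n$ is a single vertex with $M=\emptyset$ and the loops are skipped, leaving $a_{v_n}=\emptyset$.

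The first ingredient is soundness: every candidate $L$ formed at step~\ref{algo2:seta} is a clique through $v_i$, because $N(v_i)\cap A$ is a set of pairwise adjacent vertices (a subset of the clique $A$) all adjacent to $v_i$; and the flag bookkeeping at \ref{algo2:local-s}--\ref{algo2:selectL} keeps $a_{v_i}$ an antichain for inclusion. Indeed $old$ is set TRUE precisely when some stored $B\supseteq L$ (in which case $L$ is rejected), while every stored $B\subsetneq L$ is deleted at \ref{algo2:elimB}; hence after each candidate no two stored sets are comparable, which is conditions (ii) and (iii) of Definition~\ref{def:max}, and condition (i) holds because every stored set contains $v_i$.

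The decisive ingredient is completeness: every maximal clique $K\ni v_i$ of $G_i$ is produced as some $L$. Writing $K=\{v_i\}\cup C$, I would take $x=v_j$ to be the vertex of $C$ of smallest index, i.e. the one restored last; then all other vertices of $C$ have larger index, so $C\subseteq\{v_j,\dots,v_n\}$ and $C$ is a clique of $G_j$ through $v_j$. By the induction hypothesis $a_{v_j}$ --- carried unchanged into the present call by step~\ref{algo2:valueH} --- contains a maximal clique $A\supseteq C$, and choosing this $x$ and $A$ yields $L=(N(v_i)\cap A)\cup\{v_i\}\cup\{x\}\supseteq K$; as $L$ is itself a clique through $v_i$ in $G_i$ and $K$ is maximal, $L=K$. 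Thus $K$ is generated, is contained in no other clique, and so is appended and never later removed. Completeness together with the antichain property then closes the induction: a stored set failing to be maximal would be strictly contained in some maximal $K\ni v_i$, which by completeness is also stored, contradicting the antichain. Property~(\ref{def2:p2}) follows at once: $K_{\max}(G)$ has a least-index vertex $v_i$, whence $K_{\max}(G)\subseteq\{v_i,\dots,v_n\}$ is a maximal clique of $G_i$ through $v_i$ and lies in $a_{v_i}$, so Equation~(\ref{eq_MK}) evaluated over the lists returns a clique of size $|K_{\max}(G)|$.

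The hard part is the completeness step, and specifically seeing why it is harmless that Algorithm~\ref{cliques2} never updates the neighbour lists $a_x$: the sub-clique $C$ is recovered through its last-restored vertex $v_j$, whose list was computed in $G_j$, a graph already containing all of $C$, so the ``stale'' list $a_{v_j}$ is in fact complete for this purpose. A fully rigorous version must also check that the deletions at \ref{algo2:elimB}, performed while the loop \ref{algo2:localL} iterates over $a_{v_i}$, do not corrupt the $new$/$old$ computation; here I would invoke the antichain invariant at the start of each candidate, which guarantees that each stored $B$ falls into exactly one branch and that removing a strict subset of $L$ cannot change the classification of the remaining sets.
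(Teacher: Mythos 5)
Your proof is correct, and in substance it completes the paper's argument rather than merely reproducing it. The paper follows the same skeleton --- termination by counting recursive calls, the antichain bookkeeping of steps~\ref{algo2:localL}--\ref{algo2:localL-end}, soundness of the candidates built at step~\ref{algo2:seta} --- but at the decisive point it only \emph{asserts} completeness: ``when the vertex $v$ is reconnected to the graph, all the maximal cliques are computed and stored,'' and that a larger clique arising later ``will be always found because it will [be] stored in the last reconnected vertex belonging to this clique.'' You actually prove this assertion: you formulate the precise invariant that after $v_i$ is restored, $a_{v_i}$ holds all maximal cliques of the induced graph $G_i$ through $v_i$, and you recover an arbitrary maximal clique $K=\{v_i\}\cup C$ through the last-restored vertex $v_j$ of $C$, whose list --- never modified after $v_j$'s own restoration, since Algorithm~\ref{cliques2}, unlike Algorithm~\ref{cliques}, never writes into neighbours' lists --- contains some maximal $A\supseteq C$ of $G_j$, so that $L=(N(v_i)\cap A)\cup\{v_i\}\cup\{v_j\}\supseteq K$ is a clique and maximality forces $L=K$. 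That lemma is exactly what is missing from the paper, and your further check that a deletion at step~\ref{algo2:elimB} can never coexist with a rejection of the same candidate $L$ (because $B_1\subsetneq L\subseteq B_2$ would contradict the antichain property of the stored sets) is likewise absent there. One small repair: your stated invariant fails on singleton cliques --- if $v_i$ is isolated in $G_i$ the algorithm leaves $a_{v_i}=\emptyset$ rather than $\{\{v_i\}\}$ --- so the invariant should be restricted to maximal cliques with at least one edge (the paper makes the same harmless omission), and your base case $i=n$ should be read accordingly; property~(\ref{def2:p2}) is unaffected whenever $G$ has at least one edge.
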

\begin{proof}
Let us start with the winding phase of the recursion, that is,
steps~\ref{algo2:deg},~\ref{algo2:minusv} and~\ref{algo2:recur} are
executed until we reach an empty graph $G'$ (see~\ref{algo2:endif}).
At this step, the end of recursion (step~\ref{algo2:endif}) is found because each call
to {\tt find\_cliques$(G')$} function is done with a reduced set of vertices
$|V'|=|V-1|$ (and its induced graph), and $G$ has finite size; i.e., the
recursion is done $n$ times. 

Next, we analyze if steps~\ref{algo2:localL} to~\ref{algo2:localL-end}
maintains the list $a_v$ according to Definition~\ref{def:max}. Firstly,
suppose that a certain set $L$ in a subset of one or more sets in $a_v$
list, that is $L\subset S_j\in a_v$. In this case, for every $S_j\in a_v$, the
conditional step~\ref{algo2:local-s} is false because the size of $L$ is
necessarily smaller than $S_j$ because $L\subset S_j$; then the `else' clause
is executed. The condition in step~\ref{algo2:BnotinL} is also `false' by
hypothesis $L\subset S_j$, and again the `else' clause is selected, setting the
$old$ variable as $TRUE$ (notice this variable was initialized as $FALSE$ in
step~\ref{algo2:iniF}). Therefore, the conditional sentence in
step~\ref{algo2:selectL} will be `false' because the value of $old$ is not
changed until step~\ref{algo2:iniF} is executed again, and $L$ will not included
in $a_v$. Secondly, suppose that a certain $L$ in not a subset of any $A_i\in
a_v$, and certain $A_j$ are subsets of $L$. Considering when condition
in~\ref{algo2:local-s} is true, then $new$ will be set `true' only if sets $A_j$
are been considered, and also the set $A_j$ is eliminated form list $a_v$ in
step~\ref{algo2:elimB} (notice that the case $a_v=\emptyset$
conditions~\ref{algo2:local-s} and~\ref{algo2:local-in-L} are true because $B$
size is zero and $B \cap L = B$ when $B=\emptyset$). Finally, for the case in
which $L$ is different to all sets $A_i\in a_v$, $new$ is set to $TRUE$ in
step~\ref{algo2:BmL} or step~\ref{algo2:BnotinL-T} because, either $|A_i|<|L|$ and
$A_i\nsubseteq L$, or $A_i\nsupseteq L$, respectively.   
Consequently, it is shown
that the algorithm maintains list $a_v$ verifying Definition~\ref{def:max}.
 
We continue with the algorithm from the end of the  winding phase.  Let's start
with the unwinding phase when steps~\ref{algo2:valueH} and~\ref{algo2:return} are
executed for the first time: the return of the function will carry a graph with
just the vertex $v$, no edges and $a_v=\emptyset$ (step~\ref{algo2:ini}), that
is $H=(\{v\},\emptyset,a_v)$.  Then, the following instance(s) can add vertices
of degree zero until one instance begins to add the first edges
(edge), getting a star with leaves (one leaf).
At this instance, steps~\ref{algo2:forneig} and~\ref{algo2:neigL} are executed
and step~\ref{algo2:seta} yields $L=\{v,x\}$ because $a_x$ is empty (the vertex
$x$ has no registered neighbors until now). 
Steps~\ref{algo2:localL} to~\ref{algo2:localL-end} are executed yielding as
result $new=TRUE$ and $old=FALSE$ because $a_v=\emptyset$, and getting finally
$a_v=\{\{v,x\}\}$ in step~\ref{algo2:selectL}. Until now, we shown the initial
phase, where the first cliques of size 2 are stored.


We continue to treat the case at any instance in the unwinding phase of the
Algorithm~\ref{cliques}. 

To verify property~(\ref{def2:p1}) of Definition~\ref{def2}, we need to show
that each set $A_i\in a_v$ is a clique. As we shown, the initialization phase
let always sets composed by a vertex $w$ and its neighbors $z\in N(w)$, which
are cliques of size 2. A new joining vertex $v$ neighbor of $w$ will compute
the step~\ref{algo2:seta} with $a_w\neq\emptyset$ as $L\leftarrow \big(N(v)\cap
A) \cup v \cup w$, where can be $A=\{w,z\}$. Thus, if $z$ is also a neighbor of
$v$, the result will be $L=\{v,w,z\}$, which is also a clique. In general, if
the set $A$ is a clique, then the result of step~\ref{algo2:seta} is also a
clique because we find the intersection of $A$ with the neighbors of the actual
connected vertex $v$.  Then, all the sets $A_i\in a_v$ are always cliques; and
as we already shown that they verify the Definition~\ref{def:max}, they are
also maximal.

To conclude the proof is enough to determine if the
property~(\ref{def2:p2}) of Definition~\ref{def2} is obtained by
Algorithm~\ref{cliques}.  The initial phase was already shown some
paragraphs before. steps~\ref{algo2:localL} to~\ref{algo2:localL-end}
assure that a new clique $L$ that the joining vertex $v$ is stored when
it is different to all the previous ones, or any of the previous $A_i\in
a_v$ are a subset of $L$. Then, when the vertex $v$ is reconnected to
the graph, all the maximal cliques are computed and stored, and among
them the maximum clique that this vertex $v$ belongs to at this time of
the algorithm; we call it $K_{max,j}(v)$. Moreover, it is possible that
in a later time of the execution of the algorithm, other vertex $w$
found a greater clique containing the last seen $K_{max,j}(v)$, but this
clique will be always found because it will stored in the last
reconnected vertex belonging to this clique.  Therefore, reading all the
$A_i\in a_v$ for all vertices the maximum clique is obtained.
\end{proof}

\paragraph{\bf Time complexity of Algorithm~\ref{cliques2}}
We firstly analyze the cost of the central loop in steps~\ref{algo2:localL}
to~\ref{algo2:localL-end}. Considering that the size of list $a_v$ can
be bound by the function $S(d_{\max})$, and the set operations in
steps~\ref{algo2:local-in-L} and~\ref{algo2:BnotinL} are bound by the maximum
clique size in $v$, that is $O(d_{max})$, therefore this loop is done in
$O(d_{max}\cdot S(d_{\max}))$.  

Then, step~\ref{algo2:seta} of Algorithm~\ref{cliques} is an intersection of
two sets of size $d_{\max}$ (maximum degree), if both are ordered\footnote{This
can be done at the beginning for all vertices in graph $G$, taking $O(n\cdot
d_{\max} \cdot \log(d_{\max}))$.} taking $O(d_{\max})$.  Next, we consider
loop~\ref{algo2:neigL}, taking $S(d_{\max})$ times because $|a_x|\leq
S(d_{\max})$, giving $O(d_{\max}\cdot S^2(d_{\max}))$. Loop~\ref{algo2:forneig} on vertex $v$
neighbors takes an extra $d_{\max}$ times, giving $O(d_{\max}^2\cdot
S^2(d_{\max}))$.
Step~\ref{algo2:deg} has a complexity of $O(1)$.  Finally, the recursion is
done for each vertex of the graph, producing a total time complexity of
$O(n\cdot d_{\max}^2\cdot S^2(d_{\max}))$.  Considering graphs with $m\gg n$, we can express the
$n$ recursions and the visit to all neighbors in loop~\ref{algo2:forneig} as
visiting all the edges, yielding a time complexity of $O(m\cdot
d_{\max}\cdot S^2(d_{\max}))$.

For graphs in general, where $d_{\max}$ could be bound by $n$, time
complexity is $O(n^3\cdot S^2(n))$. Now, the main problem is to bound
$S(d_{\max})$ function. In general, this function can be exponential but for
some family of graphs it can be polynomial. This last case is observed in
most of the graphs issues form the Complex System field, whose have a
heavy tailed degree distribution which can be bound by a power law.
These last have also the property that $m\ll n^2$, that is they are
sparse. As the number of cliques is highly related to the number of
triangles in the graph, and this is low because the graph is sparse, the
number of cliques per vertex is also low. For these cases, we can model
this function as $S(d_{\max})=d_{\max}^\alpha$ with $\alpha\in\mathbb{N}$. Then,
knowing that $n^{\frac{1}{\beta}}$ is a bound of $d_{\max}$,
therefore the complexity yields $O(n^{1+\frac{2+\alpha}{\beta}})$ for
$\beta\leq 2+\alpha$; and for $\beta>2+\alpha$ 
the elimination in \ref{algo:minusv} can dominate, reaching the bound
$O(n^2)$ or $O(n\cdot d_{\max}\cdot\log d_{\max})$.

\paragraph{\bf Storage complexity} It can be computed as the space to storage
graph $G$, which is $O(n\cdot d_{\max})$, and the space occupied by all the
$a_v$ list of sets. This last quantity can be computed as the length of each
set in $a_v$, which is bound by $S(d_{\max})$. Thus, counting $d_{max}$
neighbors, the total storage complexity is $O(n\cdot d_{\max}\cdot
\max(d_{\max},S(d_{\max})))$.

\section{Applications}

In this section we illustrate Algorithm~\ref{cliques} and~\ref{cliques2}
through an implementation developed in {\tt python} programming language
~\cite{ah2011}, and its application to some graphs showing their maximum
clique.

Firstly, we apply our algorithm to find cliques to some random
graphs defined by~\cite{ErdosRenyi59}. It is shown
in~\cite{bollobas01,BollobasEdos76} that an ER random graph has a high
probability to contain a clique of size,
\begin{equation}
r = \frac{2\cdot\log n}{\log 1/p} \enspace,\label{eq_r}
\end{equation}
where $n$ is the number of vertices and $p$ is the probability that an
edge exist between any pair of vertices.

\begin{table}[h]
{\small
\begin{tabular}{ c|c|c|c|c|c|c }
$n$ & $p$ & $\bar{d}$ & $r$ & $|K_{\max}|$ & $|\mathcal{K}|: \mathcal{K}=\{K \in K_{\max}\}$ & induced $r+1$  \\
\hline
\hline
100 & 0.01 & 1 & 2 & {\bf 2}  & 60 & yes \\
1000 & 0.01 & 10  & 3 & {\bf 3}  & 159 & yes \\
10000 & 0.01 & 100  & 4 & {\bf 4}  & 372 & yes \\
10000 & 0.04642 & 464.2 & 6 & {\bf 6}  &  5 & yes \\
\end{tabular}
}
\caption{Maximum cliques in Erd\"os Renyi graphs.}\label{ER_res}
\end{table}

Table~\ref{ER_res} show the results, where the columns are: the size of the
graph, the probability $p$, the average degree $\bar{d}$, 
the computed $r$ according to Equation~\ref{eq_r}, the size found by
Equation~\ref{eq_MK}, the number of different cliques (i.e., at least
one vertex is different), and if an induced clique of size $r+1$ were
found. The last column is obtained adding new edges to build a greater
clique than the maximum, it shows 'yes' when this clique is found and
'not' if this is not found. Moreover, the 'yes' answer also means that
we find just one clique of that size (see the number of clique $r$ find
in the original graph).  We tested the algorithm on several graphs of
each kind, obtaining the same results (excluding $|\mathcal{K}|$ which
changed some times). We display just one result of each kind. 

The result is evident, we always find the predicted maximum clique,
even when an artificial one is introduced.

\begin{figure}[b]
\includegraphics[height=0.90\textwidth,angle=-0]{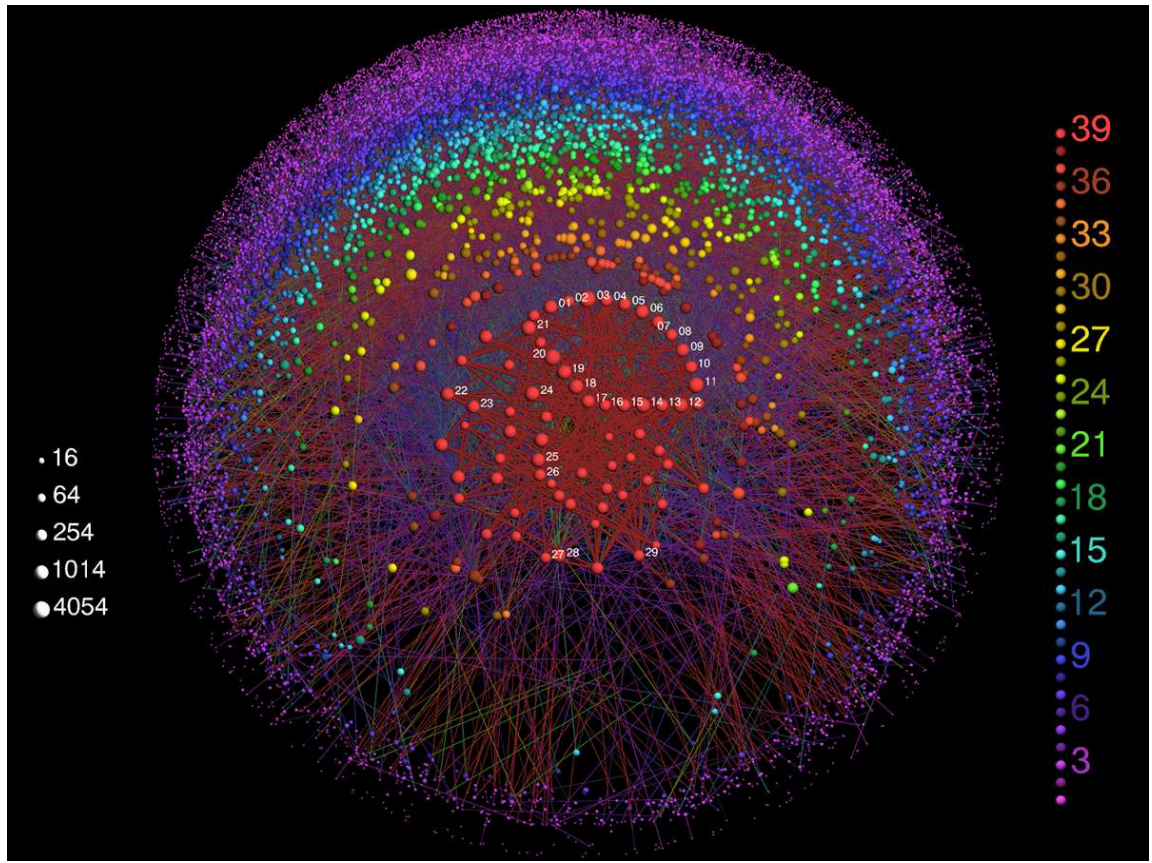}
\caption{Visualization of AS Internet map by LaNet-vi. Vertices in the maximum clique are labeled with numbers.}
\label{fig:AS}
\end{figure}

Secondly, we applied our algorithm to a AS Internet graph. This graph
has, as main properties, a power law degree distribution and most
vertices of low degree are connected to the high degree ones. We used an
exploration of~\cite{IR_CAIDA} performed in September 2011.
Figure~\ref{fig:AS} shows a visualization of this map obtained by
LaNet-vi~\cite{BAHB2008}. This visualization is based on $k$-core
decomposition. A $k$-core is a the maximum induced subgraph such that
all vertices have at least $k$ degree~\cite{Seidman83,Bollobas84}.
LaNet-vi paints each vertex with the rain-bow colors according to its
{\em shell index}, i.e., the maximum core that a vertex belongs to.  It
also makes a greedy clique decomposition of the top core, i.e., the core
with maximum $k$, placing each clique in circular sector according to
its size.  

In this graph Algorithm~\ref{cliques2} found a $K_{29}$ while LaNet-vi
found a $K_{24}$; this is displayed as the largest circular sector of
red vertices in Figure~\ref{fig:AS}. Moreover, this figure shows
vertices of the $K_{29}$ as those enumerated from 01 to 29. It is
possible to appreciate that, even if all vertices are in the top
core, the heuristic of LaNet-vi do not find this clique. For cases
where some vertex is not at the top core LaNet-vi never find the maximum
clique. Algorithm~\ref{cliques} runs several times faster than
Algorithm~\ref{cliques2}, but it not always find the maximum clique. For
instance, for some starting vertices, Algorithm~\ref{cliques} found a
$K_{27}$ instead of $K_{29}$.

Comparing the execution time of this graph and a ER graph of the same
size, e.g., the same number of edges, we find that AS graph ends quicker
than ER graph, since that its degree distribution follows a power law
with $\beta\simeq 2.2$.

\section{Discussion}

As we have already remarked this problem is NP-Complete, and this can be
see from the complexity of Algorithm~\ref{cliques2}, which depends on
the number of cliques that a vertex belongs to, or the
Algorithm~\ref{cliques} do not find always the maximum clique. 
However, this papers aims to introduce a new approach to find cliques,
that seems to be faster that the classical algorithms.

\bibliographystyle{apalike}
\bibliography{clique}

\end{document}